\pgfplotsset{compat=1.14}
\pgfplotsset{every tick label/.append style={font=\footnotesize}}
\newcolumntype{R}{>{\raggedleft\arraybackslash}X}
\newcolumntype{L}{>{\raggedright\arraybackslash}X}
\newcolumntype{C}{>{\centering\arraybackslash}X}
\newcolumntype{M}[1]{>{\centering\arraybackslash}m{#1}}
\newcolumntype{K}{>{\columncolor{gray!20}}C}
\newcolumntype{k}{>{\columncolor{gray!20}}c}
\newlength{\tablen}
\newcolumntype{.}{D{.}{.}{-1}}
\renewcommand\p@subfigure{\arabic{figure}.}
\renewcommand\p@subtable{\arabic{table}.}
\setlist[itemize]{leftmargin=2.5\parindent}
\setlist[enumerate]{leftmargin=2.5\parindent}
\def\addlegendimage{\csname pgfplots@addlegendimage\endcsname}
\theoremstyle{plain}
\newtheorem{corollary}{Corollary}[section]
\newtheorem{proposition}{Proposition}
\theoremstyle{definition}
\newtheorem{example}{Example}
\theoremstyle{remark}
\let\@fnsymbol\@alph
\def\keywords{\vspace{.5em} 
{\noindent \textit{Keywords}: }}
\def\JEL{\vspace{.5em} 
{\noindent \textbf{\emph{JEL} classification number}: }}
\def\AMS{\vspace{.5em} 
{\noindent \textbf{\emph{MSC} class}: }}
\author{
\href{https://sites.google.com/view/laszlocsato}{L\'aszl\'o Csat\'o}\thanks{~Corresponding author. E-mail: \emph{laszlo.csato@sztaki.hu} \newline Institute for Computer Science and Control (SZTAKI), E\"otv\"os Lor\'and Research Network (ELKH), Laboratory on Engineering and Management Intelligence, Research Group of Operations Research and Decision Systems, Budapest, Hungary \newline 
Corvinus University of Budapest (BCE), Department of Operations Research and Actuarial Sciences, Budapest, Hungary}
$\qquad \qquad$
\href{https://sites.google.com/view/doragretapetroczy}{D\'ora Gr\'eta Petr\'oczy}\thanks{~E-mail: \emph{doragreta.petroczy@uni-corvinus.hu} \newline Corvinus University of Budapest (BCE), Department of Finance, Budapest, Hungary}
}
\title{Fairness in penalty shootouts: \\ Is it worth using dynamic sequences?}
\date{\today}
\def\Dedication{
{\noindent
$\mathfrak{Alle}$ $\mathfrak{diese}$ $\mathfrak{Bestimmungen}$ $\mathfrak{lassen}$ $\mathfrak{sich}$ $\mathfrak{nicht}$ $\mathfrak{absolut}$ $\mathfrak{auf}$ $\mathfrak{jeden}$ $\mathfrak{Fall}$ $\mathfrak{anwenden}$, $\mathfrak{aber}$ $\mathfrak{sie}$ $\mathfrak{m\ddot{u}ssen}$ $\mathfrak{dem}$ $\mathfrak{Handelnden}$ $\mathfrak{gegenw\ddot{a}rtig}$ $\mathfrak{sein}$, $\mathfrak{um}$ $\mathfrak{den}$ $\mathfrak{Nutzen}$ $\mathfrak{der}$ $\mathfrak{in}$ $\mathfrak{ihnen}$ $\mathfrak{enthaltenen}$ $\mathfrak{Wahrheit}$ $\mathfrak{nicht}$ $\mathfrak{zu}$ $\mathfrak{verlieren}$, $\mathfrak{da}$ $\mathfrak{wo}$ $\mathfrak{sie}$ $\mathfrak{gelten}$ $\mathfrak{kann}$.\footnote{~
``\emph{None of them can be applied absolutely in every case, but they must always be present to the mind of the chief, in order that the benefit of the truth contained in them may not be lost in cases where that truth can be of advantage.}'' (Source: Carl von Clausewitz: \emph{On War}, Book 2, Chapter 4 [Methodicism]. Translated by Colonel James John Graham, London, N. Tr\"ubner, 1873. \url{http://clausewitz.com/readings/OnWar1873/TOC.htm})}
}
\vspace{0.25cm}

\flushright
\noindent (Carl von Clausewitz: \emph{Vom Kriege})

\vspace{1cm} 
\justify }
\begin{document}

\maketitle
\thispagestyle{empty}
\Dedication

\begin{abstract}
\noindent
The sequence of moves in a dynamic team tournament may distort the ex-ante winning probabilities and harm efficiency. This paper compares seven soccer penalty shootout rules that determine the kicking order, from a theoretical perspective. Their fairness is evaluated in a reasonable model of First Mover Advantage. We also discuss the probability of reaching the sudden death stage. In the case of stationary scoring probabilities, dynamic mechanisms are not better than static rules. However, it is worth compensating the second-mover by making it the first-mover in the sudden death stage. Our work has the potential to impact decision-makers who can guarantee fairer outcomes in dynamic tournaments by a carefully chosen sequence of actions.


\keywords{dynamic tournament; fairness; mechanism design; sequential contest; soccer}

\AMS{60J20, 91A80}

\JEL{C44, C72, Z20}
\end{abstract}

\clearpage
\newgeometry{top=25mm,bottom=25mm,left=25mm,right=25mm}

\section{Introduction} \label{Sec1}

Soccer penalty shootouts---used to decide a tied match in a knockout tournament---offer a natural laboratory to test whether teams with equally skilled players have the same probability of winning. Since the same team kicks the first penalty in all rounds, mental pressure might imply that the second-mover has significantly less than 50\% chance to win.

Previous research has shown mixed evidence and resulted in an intensive debate. While most authors find that there is a First Mover Advantage \citep{ApesteguiaPalacios-Huerta2010, Palacios-Huerta2014, DaSilvaMioranzaMatsushita2018, RudiOlivaresShatty2020}, other papers do not report such a problem \citep{KocherLenzSutter2012, ArrondelDuhautoisLaslier2019}. According to \citet{KassisSchmidtSchreyerSutter2021}, the order of kicking does not influence the winning probabilities, but the right to choose the sequence of moves (being preferred by the coin toss) does matter.
This disagreement is probably due to inadequate sample sizes \citep{VandebroekMcCannVroom2018}, thus the straightforward solution would be to design and implement an appropriate field experiment---which would take years and wide support from the decision-makers.

However, almost all stakeholders recognise that penalty shootouts are potentially \emph{unfair}. According to a survey, more than 90\% of coaches and players want to increase the psychological pressure on the other team by kicking the first penalties \citep{ApesteguiaPalacios-Huerta2010}. With the increasing use of information technology in soccer games such as the video assistant referee (VAR), the fairness of penalty shootouts will probably emerge as a topic of controversy in the future.

An alternative mechanism, the Alternating ($ABBA$) rule, has already been trialled in some matches \citep{Csato2021c}. Even though this sequence does not favour any player in a tennis tiebreak \citep{Cohen-ZadaKrumerShapir2018}, and Monte Carlo simulations show that it substantially mitigates the bias in soccer \citep{DelGiudice2019}, there are other suggestions to improve fairness. \citet{Palacios-Huerta2012} has argued to follow the \href{https://en.wikipedia.org/wiki/Thue\%E2\%80\%93Morse_sequence}{Prouhet--Thue--Morse sequence}, where the first $2^n$ moves are mirrored in the next $2^n$, hence the shooting order becomes $A|B|BA|BAAB|BAABABBA\dots$. Recent proposals include the Catch-up rule \citep{BramsIsmail2018}, its variant called the Adjusted Catch-up rule \citep{Csato2021c}, and the Behind-first rule \citep{AnbarciSunUnver2021}.

We contribute to the topic by evaluating several penalty shootout mechanisms (policy options) in a reasonable mathematical model (state of nature), which  reflects the potential advantage of the team kicking the first penalty. In particular, a standard approach from the literature is adopted \citep{VandebroekMcCannVroom2018, LambersSpieksma2021}, namely, the team lagging behind is assumed to score with a lower probability.

The novel results can be summarised as follows:
\begin{itemize}
\item
Dynamic (history-dependent) sequences are investigated first in this setting.
\item
Two prominent dynamic mechanisms, the Catch-up rule \citep{BramsIsmail2018} and the Behind-first rule \citep{AnbarciSunUnver2021} are found to be equally (un)fair (Proposition~\ref{Prop31}).
\item
The fairness of the penalty shootout designs is compared under a wide set of parameters. The idea of \citet{Csato2021c}, reversing the first- and the second-mover at the beginning of the sudden death stage, improves the fairness of dynamic sequences (Section~\ref{Sec33}).
\item
The probability of reaching the sudden death stage is proved to be independent of the shooting order (Proposition~\ref{Prop32}).
\end{itemize}
To summarise, the known dynamic designs have no advantage over the static rules from any perspective, at least in the mathematical model considered here. Consequently, our findings can filter out poor alternative penalty shootout mechanisms and save the resources for theoretically attractive policy options.

The paper proceeds as follows.
Section~\ref{Sec2} presents the penalty shootout rules to be compared, as well as the mathematical formulation of psychological pressure.
The results are detailed in Section~\ref{Sec3}. Section~\ref{Sec31} computes the probability of winning in the sudden death stage. Section~\ref{Sec32} reveals the equivalence of two dynamic sequences from the perspective of fairness, while Section~\ref{Sec33} contains the numerical analysis for a large set of parameters. Section~\ref{Sec34} deals with some questions beyond fairness.
Finally, Section~\ref{Sec4} offers concluding remarks.

\section{Shootout mechanisms and First Mover Advantage} \label{Sec2}

Denote the team that kicks the first penalty by $A$ and the other team by $B$. The penalty shootout consists of five rounds in its regular phase. In each round, both teams kick one penalty.
The shooting order in a round can be 
(1) independent of the outcomes in the previous rounds (\emph{static} sequence);
(2) influenced by the results of preceding penalties (\emph{dynamic} sequence).
The scores are aggregated after the five rounds, and the team which has scored more goals than the other wins the match.
If the scores are level, the \emph{sudden death} stage starts and continues until one team scores a goal more than the other from the same number of penalties.

Any rule that determines the shooting order, some of them presented in Section~\ref{Sec21}, can be regarded as a policy option to be chosen by the decision-maker.
The probability model in Section~\ref{Sec22} represents a reasonable state of nature by describing how psychological pressure affects performance in a penalty shootout.

\subsection{Penalty shootout designs} \label{Sec21}

We examine three static procedures:
\begin{itemize}
\item
\emph{Standard ($ABAB$)} rule: team $A$ kicks the first and team $B$ kicks the second penalty in each round. This is the official penalty shootout design in soccer.
\item
\emph{Alternating ($ABBA$)} rule: the order of the teams alternates, the second round ($BA$) mirrors the first ($AB$), and this sequence continues even in the possible sudden death phase.
\item
\emph{$ABBA|BAAB$} rule: the order in the first two rounds is $ABBA$, which is mirrored in the next two ($BAAB$), and this sequence is repeated.
\end{itemize}

The ``double alternating'' $ABBA|BAAB$ mechanism is considered because it takes us one step closer to the Prouhet--Thue--Morse sequence than the Alternating ($ABBA$) design. In our opinion, it is unlikely that the organisers want to move further along this line.

There are two basic dynamic designs, both of them having two variants:
\begin{itemize}
\item
\emph{Catch-up} rule \citep{BramsIsmail2018}: the first kicking team alternates but the shooting order does not change if the first kicker missed and the second succeeded in the previous round.
\item
\emph{Adjusted Catch-up} rule \citep{Csato2021c}: the first five rounds are designed according to the Catch-up rule, however, team $B$ kicks the first penalty in the sudden death stage (sixth round) regardless of the outcome in the previous round.
\item
\emph{Behind-first} rule \citep{AnbarciSunUnver2021}: the team having less score kicks the first penalty in the next round, and the order of the previous round is mirrored if the score is tied.
\item
\emph{Adjusted Behind-first} rule: the first five rounds are designed according to the Behind-first rule, however, team $B$ kicks the first penalty in the sudden death stage (sixth round) regardless of the outcome in the previous round.
\end{itemize}
The Adjusted Behind-first mechanism applies the idea underlying the Adjusted Catch-up rule for the Behind-first design. According to our knowledge, it is first defined here.

\begin{table}[t!]
\centering
\caption{An illustration of the penalty shootout mechanisms}
\label{Table1}
\rowcolors{3}{gray!20}{} 
\centerline{
    \begin{tabularx}{1.05\textwidth}{lc CCCCC CCCCC|| CCCC} \toprule \hiderowcolors
\multirow{2}{*}{\emph{Mechanism}} & \multirow{2}{*}{Team} & \multicolumn{10}{c||}{Penalty kicks in the regular phase} &  \multicolumn{4}{c}{Sudden death} \\
     &		& 1     & 2     & 3     & 4     & 5     & 6     & 7     & 8     & 9     & 10    & 11    & 12    & 13    & 14 \\ \bottomrule \showrowcolors
    $ABAB$  & \textcolor{red}{\textbf{A}}     & \textcolor{red}{\ding{55}} &       & \textcolor{red}{\ding{52}} &       & \textcolor{red}{\ding{55}} &       & \textcolor{red}{\ding{52}} &       & \textcolor{red}{\ding{55}} &       & \textcolor{red}{\ding{52}} &       & \textcolor{red}{\ding{52}} &  \\
          & \textcolor{blue}{\textbf{B}}     &       & \textcolor{blue}{\ding{52}} &       & \textcolor{blue}{\ding{52}} &       & \textcolor{blue}{\ding{55}} &       & \textcolor{blue}{\ding{55}} &       & \textcolor{blue}{\ding{55}} &       & \textcolor{blue}{\ding{52}} &       & \textcolor{blue}{\ding{55}} \\ \bottomrule
    $ABBA$  & \textcolor{red}{\textbf{A}}     & \textcolor{red}{\ding{55}} &       &       & \textcolor{red}{\ding{52}} & \textcolor{red}{\ding{55}} &       &       & \textcolor{red}{\ding{52}} & \textcolor{red}{\ding{55}} &       &       & \textcolor{red}{\ding{52}} & \textcolor{red}{\ding{52}} &  \\
          & \textcolor{blue}{\textbf{B}}     &       & \textcolor{blue}{\ding{52}} & \textcolor{blue}{\ding{52}} &       &       & \textcolor{blue}{\ding{55}} & \textcolor{blue}{\ding{55}} &       &       & \textcolor{blue}{\ding{55}} & \textcolor{blue}{\ding{52}} &       &       & \textcolor{blue}{\ding{55}} \\ \bottomrule
    $ABBA|BAAB$ & \textcolor{red}{\textbf{A}}     & \textcolor{red}{\ding{55}} &       &       & \textcolor{red}{\ding{52}} &       & \textcolor{red}{\ding{55}} & \textcolor{red}{\ding{52}} &       & \textcolor{red}{\ding{55}} &       &       & \textcolor{red}{\ding{52}} &       & \textcolor{red}{\ding{52}} \\
          & \textcolor{blue}{\textbf{B}}     &       & \textcolor{blue}{\ding{52}} & \textcolor{blue}{\ding{52}} &       & \textcolor{blue}{\ding{55}} &       &       & \textcolor{blue}{\ding{55}} &       & \textcolor{blue}{\ding{55}}      & \textcolor{blue}{\ding{52}} &       & \textcolor{blue}{\ding{55}} &  \\ \bottomrule
    Catch-up & \textcolor{red}{\textbf{A}}     & \textcolor{red}{\ding{55}} &       & \textcolor{red}{\ding{52}} &       &       & \textcolor{red}{\ding{55}} & \textcolor{red}{\ding{52}} &       &       & \textcolor{red}{\ding{55}} & \textcolor{red}{\ding{52}} &       &       & \textcolor{red}{\ding{52}} \\
          & \textcolor{blue}{\textbf{B}}     &       & \textcolor{blue}{\ding{52}} &       & \textcolor{blue}{\ding{52}} & \textcolor{blue}{\ding{55}} &       &       & \textcolor{blue}{\ding{55}} & \textcolor{blue}{\ding{55}} &       &       & \textcolor{blue}{\ding{52}} & \textcolor{blue}{\ding{55}} &  \\ \bottomrule
    Adj.\ Catch-up & \textcolor{red}{\textbf{A}}     & \textcolor{red}{\ding{55}} &       & \textcolor{red}{\ding{52}} &       &       & \textcolor{red}{\ding{55}} & \textcolor{red}{\ding{52}} &       &       & \textcolor{red}{\ding{55}} &       & \textcolor{red}{\ding{52}} & \textcolor{red}{\ding{52}} &  \\
          & \textcolor{blue}{\textbf{B}}     &       & \textcolor{blue}{\ding{52}} &       & \textcolor{blue}{\ding{52}} & \textcolor{blue}{\ding{55}} &       &       & \textcolor{blue}{\ding{55}} & \textcolor{blue}{\ding{55}} &       & \textcolor{blue}{\ding{52}} &       &       & \textcolor{blue}{\ding{55}} \\ \bottomrule
    Behind-first & \textcolor{red}{\textbf{A}}     & \textcolor{red}{\ding{55}} &       & \textcolor{red}{\ding{52}} &       & \textcolor{red}{\ding{55}} &       & \textcolor{red}{\ding{52}} &       &       & \textcolor{red}{\ding{55}} & \textcolor{red}{\ding{52}} &       &       & \textcolor{red}{\ding{52}} \\
          & \textcolor{blue}{\textbf{B}}     &       & \textcolor{blue}{\ding{52}} &       & \textcolor{blue}{\ding{52}} &       & \textcolor{blue}{\ding{55}} &       & \textcolor{blue}{\ding{55}} & \textcolor{blue}{\ding{55}} &       &       & \textcolor{blue}{\ding{52}} & \textcolor{blue}{\ding{55}} &  \\ \bottomrule
    Adj.\ Behind-first & \textcolor{red}{\textbf{A}}     & \textcolor{red}{\ding{55}} &       & \textcolor{red}{\ding{52}} &       & \textcolor{red}{\ding{55}} &       & \textcolor{red}{\ding{52}} &       &       & \textcolor{red}{\ding{55}} &       & \textcolor{red}{\ding{52}} & \textcolor{red}{\ding{52}} &  \\
          & \textcolor{blue}{\textbf{B}}     &       & \textcolor{blue}{\ding{52}} &       & \textcolor{blue}{\ding{52}} &       & \textcolor{blue}{\ding{55}} &       & \textcolor{blue}{\ding{55}} & \textcolor{blue}{\ding{55}} &       & \textcolor{blue}{\ding{52}} &       &       & \textcolor{blue}{\ding{55}} \\ \bottomrule
    \end{tabularx}
}
\end{table}

Table~\ref{Table1} illustrates the seven penalty shootout designs. Since the scores are 2-2 after five rounds, the shootout goes to sudden death, where both teams succeed in the sixth round. However, in the seventh round only team $A$ scores, hence it wins the match.

Note that the four dynamic mechanisms lead to different shooting orders. Team $B$ kicks the first penalty in the third round under the Catch-up rule because both teams score in the second round where team $A$ is the first-mover. On the other hand, the Behind-first rule favours team $A$ in the third round as it is lagging in the number of goals.
Both designs give the first penalty in the sixth round to team $A$ because team $B$ is the first-mover in the fifth round. The Adjusted Catch-up and Behind-first rules compensate team $B$ by kicking first in the sudden death for being disadvantaged in the first round of the shootout.

\subsection{A model of First Mover Advantage} \label{Sec22}

According to \citet[Figure~2A]{ApesteguiaPalacios-Huerta2010}, the first kicking team scores its penalties with a higher probability in all rounds. A possible reason is that most penalties are successful in soccer, thus a player whose team is lagging behind in the number of goals faces greater mental pressure \citep{ApesteguiaPalacios-Huerta2010, VandebroekMcCannVroom2018, LambersSpieksma2021}. Consequently, each player is assumed to have a probability $p$ of scoring, except for the kicker from a team having fewer scores, who succeeds with probability $q \leq p$. 

\begin{example} \label{Examp21}
Consider a penalty shootout that stands at 2-3 with the first-mover in the fourth round lagging behind. Then the 7th penalty is scored with probability $q$, and the $8$th penalty is scored with probability $p$.
\end{example}

Naturally, our framework is only an idealised version of reality because each player may have a different skill level. There is some evidence for this: stronger teams have a higher chance to win a soccer penalty shootout \citep{Krumer2020b, WunderlichBergeMemmertRein2020}, and small but statistically significant talent gaps can be observed between shooters in ice hockey \citep{LopezSchuckers2017}.

Furthermore, the introduction of two consecutive penalties by the same team may lead to other types of psychological pressure. For example, if the goalkeeper defends a penalty, it may influence the probability of scoring the next penalty by the same team since there is no time to ``calm down''. However, in the absence of adequate experiments with alternative shootout designs, it makes no sense to consider such effects.

\section{Results} \label{Sec3}

The fairness of the soccer penalty shootout is usually interpreted such that no team should have an advantage because of winning or losing the coin toss. Therefore, a mechanism is called \emph{fairer} than another if the probability of winning is closer to $0.5$ for two equally skilled teams.

\subsection{The probability of winning in the sudden death} \label{Sec31}

The first five rounds of a shootout can be represented by a finite sequence of binary numbers as each penalty is either scored or missed. But the sudden death has no definite end, thus it is necessary to calculate the winning probabilities in this phase by hand. With a slight abuse of notation, denote by $W(A)$ the winning probability of the team kicking the first penalty in the sudden death.
For the Standard ($ABAB$) rule, $W^S(A) = p(1-q) + \left[ pq + (1-p)(1-p) \right] W^S(A)$, which leads to
\[
W^S(A) = \frac{p(1-q)}{2p - pq - p^2}.
\]

The Alternating ($ABBA$), (Adjusted) Catch-up, and (Adjusted) Behind-first mechanisms coincide in the sudden death where they imply an alternating order of kicking.
The first penalty in the second round of the sudden death is kicked by team $B$, hence $W^R(A) = p(1-q) + \left[ pq + (1-p)(1-p) \right] \left[ 1 -W^R(A) \right]$, that is,
\[
W^R(A) = \frac{1 - p + p^2}{2 - 2p + pq + p^2}.
\]

The sudden death stage of the $ABBA|BAAB$ rule is the most complicated one.
Here there are two different cases:
(a) the winning probability is $W(AA)$ when team $A$ kicks the first penalty in the first two rounds of this phase, which is followed by two rounds with team $B$ being the first kicker; and
(b) the winning probability is $W(AB)$ when team $A$ kicks the first penalty in the first round of the sudden death, continued with two rounds where team $B$ is the first kicker.
$W(AA)$ or its complement should be used if the regular phase of the shootout consists of an odd number of rounds, and $W(AB)$ should be used if this is an even number.

According to our assumption on First Mover Advantage:
(1) in any round, the winning probabilities of the first and second kickers are $p(1-q)$ and $(1-p)p$, respectively; and
(2) the probability of reaching the next round is $pq + (1-p)^2$. Therefore,
\[
W(AA) = \frac{p(1-q) + \left( 1-2p+pq+p^2 \right) p(1-q) + \left( 1-2p+pq+p^2 \right)^2}{1 + \left( 1-2p+pq+p^2 \right)^2},
\]
and
\[
W(AB) = \frac{p(1-q) + \left( 1-2p+pq+p^2 \right) (1-p)p + \left( 1-2p+pq+p^2 \right)^2}{1 + \left( 1-2p+pq+p^2 \right)^2}.
\]

\subsection{The dynamic sequences are equally (un)fair} \label{Sec32}

Despite the relatively simple mathematical model, deriving analytical statements is non-trivial because five rounds of penalties mean $2^{10} = 1024$ different scenarios and the probability of each contains ten items from the set of $p$, $q$, $(1-p)$, and $(1-q)$. In practice, the shootout is finished if one team has scored more goals than the other could score, but this consideration does not decrease significantly the number of cases to be considered. 

The two basic dynamic designs turn out to be equally (un)fair.

\begin{proposition}  \label{Prop31}
The Catch-up and Behind-first rules lead to the same winning probabilities.
\end{proposition}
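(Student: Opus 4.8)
The plan is to track, at the start of each round $r$, the signed score difference $\Delta_r := (\text{goals of }A) - (\text{goals of }B)$, recorded together with the identity of the first kicker of round $r$ whenever $\Delta_r = 0$; call this pair the \emph{macro-state} (when $\Delta_r \neq 0$ it is just $\Delta_r$). I would show that under model M3 the macro-state follows one and the same round-to-round transition kernel under the Catch-up rule and under the Behind-first rule; since both rules start from the same macro-state (team $A$ kicks first, $\Delta_1 = 0$), the macro-state process then has the same law under both. As a team wins exactly when its running total is strictly ahead after the relevant number of penalties (stopping the shootout once it is mathematically decided changes neither the winner nor the argument), and as the two rules coincide in the sudden death --- both alternating, with winning probability $W_2^R(A)$ from Section~\ref{Sec31} --- the overall winning probability is a fixed function of the law of the macro-state entering round $6$, which is therefore the same for the two rules.

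The decisive ingredient is a within-round lemma: under model M3, \emph{as long as the round does not begin at a tie} ($\Delta_r \neq 0$), the distribution of the net change $\Delta_{r+1}-\Delta_r$ produced by the round does not depend on which team kicks first. Indeed, if $A$ is ahead then a single penalty cannot change who is ahead while $|\Delta_r|\ge 2$, and even when $\Delta_r=1$ the leading team's kicker faces at worst a tie; hence in either order the leader's kicker scores with probability $p$ and the trailer's with probability $q$, so $\Delta_{r+1}-\Delta_r$ is $+1,0,-1$ with probabilities $p(1-q)$, $pq+(1-p)(1-q)$, $(1-p)q$ (mirrored if $B$ is ahead). From a tie the order does matter, but the two distributions (first kicker $A$ versus $B$) are mirror images of each other --- they are $p(1-q)$, $pq+(1-p)^2$, $p(1-p)$ on $+1,0,-1$ when $A$ kicks first. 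I would establish this lemma first.

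I would then restate each rule's order-update in terms of $\Delta$: under Behind-first the next first kicker is the team with the smaller score, and is the reverse of the current order when the round ends level; under Catch-up the order reverses unless ``first kicker missed, second kicker scored'', an event which for first kicker $A$ coincides with the round's net change being $-1$, and for first kicker $B$ with its being $+1$. With this it suffices to check, case by case on the current macro-state, that the transition kernel is the same for the two rules. A tie state ($0^A$ or $0^B$) is a fixed physical situation for both rules, so the round has the same law; and if it ends level again, both rules reverse the order, so the new first kicker agrees. For a macro-state $\Delta_r = k\neq 0$, the within-round lemma makes the distribution of $\Delta_{r+1}$ rule-independent; if $|k|\ge 2$ the round cannot end level, so the only remaining point is whether, when $k=\pm 1$ and the round ends level, the two rules agree on the next first kicker.

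That last point is the one requiring genuine work, and I expect it to be the main obstacle: when $\Delta_r=\pm 1$, Behind-first always has the trailing team kick first, whereas Catch-up need not, so the two processes may have run round $r$ with different orders, and one must show they nonetheless agree on the first kicker entering the tie. The claim to establish is that whenever $\Delta$ drops from $+1$ to $0$ during a round, both rules make team $A$ --- the team that had been ahead --- the next first kicker, and symmetrically for $-1\to 0$. For Behind-first this is the ``reverse when level'' clause applied to the order $BA$. For Catch-up one checks its two possible orders in round $r$: if $A$ was first, then $\Delta_r=1\to\Delta_{r+1}=0$ forces the outcome ``$A$ missed, $B$ scored'', which is the no-reversal case, so $A$ stays first; if $B$ was first, the same transition forces ``$B$ scored, $A$ missed'', a reversal, so $A$ again becomes first. (This also shows that the transition out of a macro-state $\Delta_r=\pm 1$ does not depend on the first kicker, which is omitted from the macro-state --- as needed for the macro-state to be a legitimate Markov state.) With the kernels thus shown to coincide, the macro-state processes, and hence the winning probabilities, agree.
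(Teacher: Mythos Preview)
Your argument is correct. The macro-state you introduce is a genuine Markov state under both rules, the verification that the kernels coincide is complete (including the delicate $\Delta_r=\pm 1\to 0$ case, where you correctly check both possible Catch-up orders), and the passage to winning probabilities via the macro-state at the start of round~6 and the common alternating sudden death is sound.

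Your route differs from the paper's. The paper fixes a full outcome sequence (team~$A$'s and team~$B$'s results in every round) and argues \emph{pathwise} by contradiction: if the probability assigned to round~$k$ differed between the two rules, the round must start tied with different shooting orders, and a backward-induction casework on round~$k-1$ forces the discrepancy all the way back to round~1, which is impossible. This yields the stronger statement that every individual outcome carries the same probability under both rules. You instead argue \emph{distributionally} and forward: you build a coarse Markov state and verify directly that its one-step kernel is rule-independent. Both approaches rest on the same two observations---order-irrelevance of round probabilities in model~M3 when the score is not level, and agreement of the next first kicker whenever the score becomes level---but package them differently. The paper's proof gives a finer pointwise conclusion; your Markov formulation is more structural, makes the role of the score difference explicit, and would adapt more readily to variations (different numbers of rounds, other stochastic order rules) without redoing the casework.
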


\begin{proof}
The likelihood of each possible outcome is shown to be the same under both designs. This probability is the product of the individual probabilities in every round.

Assume that the two rules differ in the probability of the $k$th round.
Hence the shooting order of the teams in the $k$th round is different under the two mechanisms, $CD$ for the Catch-up and $DC$ for the Behind-first. 
If the scores of the teams differ at the beginning of the $k$th round, then the team lagging behind has the probability $q$ of scoring, and the other team has the probability $p$ of scoring. The commutative property of multiplication implies that the probability of the $k$th round is independent of the shooting order, which contradicts the assumption above.
To conclude, the teams should be tied at the beginning of the $k$th round and their shooting order should be different under the two mechanisms.

Consider the case when the shooting order in the $(k-1)$th round was $CD$ by the Catch-up rule. Consequently, team $C$ missed, while team $D$ succeeded in the $(k-1)$th round---otherwise the Catch-up rule would imply the alternated order $DC$ in the $k$th round. Therefore, team $D$ was lagging behind (by one goal) at the beginning of the $(k-1)$th round, thus the shooting order in the $(k-1)$th round was $DC$ by the Behind-first rule. It is a contradiction since no team is lagging behind at the beginning of the $k$th round and the order of the previous round was $DC$ under the Behind-first design, thus it should be $CD$ in the $k$th round according to this mechanism.

Consider the case when the shooting order in the $(k-1)$th round was $DC$ by the Catch-up rule. Then there are three possibilities:
\begin{itemize}
\item
\emph{Team $C$ scored and team $D$ failed in the $(k-1)$th round} \\
This contradicts the assumption that the shooting order in the $k$th round is $CD$ under the Catch-up rule as this procedure implies the unchanged order $DC$ in the $k$th round.
\item
\emph{Team $C$ failed and team $D$ scored in the $(k-1)$th round} \\
Since the teams are tied at the beginning of the $k$th round, team $D$ was lagging in the number of goals at the beginning of the $(k-1)$th round. The shooting order in the $(k-1)$th round was $DC$ under the Behind-first rule, contradicting the shooting order $DC$ in the $k$th round by the Behind-first rule as this design implies an alternated order if the score is tied, which is the case for the $k$th round.
\item
\emph{Both teams failed or both teams scored in the $(k-1)$th round} \\
Since the scores were level at the beginning of the $(k-1)$th round, too, the shooting order was $CD$ under the Behind-first rule. The repetition of the arguments above reveals that the teams should have been tied at the beginning of the penalty shootout, which holds trivially, and the shooting order in the first round is different under the two mechanisms, which is impossible.
\end{itemize}

The proof is completed as the assumption yields a contradiction in each possible case.
\end{proof}

\begin{corollary} \label{Col31}
The Adjusted Catch-up and Adjusted Behind-first rules imply the same winning probabilities.
\end{corollary}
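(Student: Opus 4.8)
The plan is to deduce the statement from Proposition~\ref{Prop31} by exploiting the fact that each Adjusted rule differs from its non-adjusted counterpart only from the sixth round onwards, and that the two Adjusted rules prescribe the \emph{same} sudden death: team~$B$ kicks first in the sixth round and the order alternates afterwards (see Section~\ref{Sec31}). Accordingly, I would split a shootout into its regular phase (rounds~1--5) and its sudden-death continuation and treat the two parts separately.

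For the regular phase, on rounds~1--5 the Adjusted Catch-up rule coincides with the Catch-up rule and the Adjusted Behind-first rule coincides with the Behind-first rule; and the argument proving Proposition~\ref{Prop31} in fact establishes the stronger statement that, in model~M3, the probability of every outcome (described round by round) is the same under the Catch-up and Behind-first rules. Truncating to five rounds, the two Adjusted rules therefore generate the same distribution over the outcomes of rounds~1--5, and in particular the same probability that team~$A$ leads, that team~$B$ leads, and that the score is level after five penalties. For the sudden death, it is the identical stochastic process under both Adjusted rules, and under model~M3 its result depends on the post-fifth-round situation only through the fact that the score is tied, not through its common level --- once a round of the sudden death starts tied, the kicking probabilities for that round are fixed. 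Consequently, conditional on reaching the sudden death, the probability that team~$A$ ultimately wins is one and the same constant for the two rules (namely $1 - W_2^R(A)$, team~$B$ being the first mover of the sudden death). Putting the pieces together, team~$A$'s overall winning probability is obtained by averaging the same conditional win-probability against the same distribution over regular-phase outcomes, so it is equal for the Adjusted Catch-up and the Adjusted Behind-first rules, which is the claim.

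An equivalent route is to replay the contradiction argument of Proposition~\ref{Prop31} directly for the two Adjusted rules: a round in which they assigned different probabilities to a fixed outcome would have to be among the first five, since in rounds~$6, 7, \dots$ both rules use the identical shooting order; and on rounds~1--5 the backward induction of Proposition~\ref{Prop31} refers only to earlier rounds, so it goes through verbatim and yields the same contradiction. The one point that genuinely requires checking --- rather than being a real obstacle --- is exactly this pair of facts: that the sudden-death phases of the two Adjusted rules coincide, and that the backward induction, applied to them, never needs to examine a round past the fifth. Once these are confirmed, the corollary follows from Proposition~\ref{Prop31} with no further work.
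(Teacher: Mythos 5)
Your proposal is correct and follows essentially the same route as the paper: the Adjusted rules coincide with their non-adjusted counterparts (hence, by the argument of Proposition~\ref{Prop31}, induce the same outcome distribution) in the regular phase, and they share an identical sudden death with team $B$ kicking first in an alternating order. The extra care you take in noting that the proof of Proposition~\ref{Prop31} gives equality of round-by-round outcome probabilities, not merely of winning probabilities, is a valid and slightly more explicit justification of the truncation step that the paper leaves implicit.
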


\begin{proof}
The Adjusted Catch-up and Adjusted Behind-first rules coincide with the Catch-up and Behind-first rules, respectively, in their regular stage. Both of them give the first penalty of the sudden death to team $B$, and they follow an alternating order in this phase.
\end{proof}

\subsection{The analysis of fairness} \label{Sec33}

The probability of winning can be accurately determined by a computer code for any values of $p$ and $q$. It is carried out by brute force, calculating the likelihood of all the $2^{2n}$ cases where $n$ is the number of rounds preceding the sudden death stage. Even though the computation can be stopped if one team has scored more goals than the other could score, this trick is not worth implementing because the running time is moderated for any reasonable length of the regular phase.

\citet{VandebroekMcCannVroom2018} derive the winning probabilities for the Standard ($ABAB$) and Alternating ($ABBA$) mechanisms, as well as for the Prouhet--Thue--Morse-sequence. In particular, the winning probability of the first team does not depend on the number of rounds in the regular phase under the Standard ($ABAB$) rule \citet[p.~735]{VandebroekMcCannVroom2018}.
\citet{LambersSpieksma2021} find the least unfair static sequence and empirically compute the winning probabilities for relevant values of $p$ and $q$ if $n=5$ and for various values of $n$ if $p=3/4$ and $q=2/3$.
However, neither study consider dynamic sequences.

\begin{table}[t!]
\centering
\caption{The probability in percentages (\%) that team $A$ wins \\ including the sudden death stage ($p = 3/4$ and $q = 2/3$)}
\label{Table2}
\rowcolors{3}{gray!20}{}
    \begin{tabularx}{\textwidth}{l CCCC >{\bfseries}cCCC} \toprule \hiderowcolors
    Mechanism & \multicolumn{8}{c}{Number of rounds} \\ \midrule \showrowcolors
    & 1     & 2     & 3     & 4     & 5     & 6     & 7     & 8 \\ \bottomrule
    $ABAB$  & 57.14 & 57.14 & 57.14 & 57.14 & 57.14 & 57.14 & 57.14 & 57.14 \\
    $ABBA$  & 52.00 & 51.62 & 51.45 & 51.38 & 51.24 & 51.23 & 51.12 & 51.14 \\
    $ABBA|BAAB$ & 51.04 & 50.44 & 50.74 & 51.00 & 50.66 & 50.41 & 50.65 & 50.80 \\
    Catch-up & 52.00 & 51.50 & 51.49 & 51.30 & 51.27 & 51.18 & 51.15 & 51.10 \\
    Adjusted Catch-up & 52.00 & 50.07 & 51.25 & 50.58 & 50.96 & 50.73 & 50.85 & 50.77 \\
    Behind-first & 52.00 & 51.50 & 51.49 & 51.30 & 51.27 & 51.18 & 51.15 & 51.10 \\
    Adjusted Behind-first & 52.00 & 50.07 & 51.25 & 50.58 & 50.96 & 50.73 & 50.85 & 50.77 \\ \toprule
    \end{tabularx}
\end{table}

First, similarly to recent papers \citep{BramsIsmail2018, Csato2021c, LambersSpieksma2021}, the parameters $p=3/4$ and $q=2/3$ are considered as they are close to the empirical scoring probabilities.
Table~\ref{Table2} presents the probability of winning for team $A$, which kicks the first penalty, for the seven penalty shootout mechanisms.


Except for the Standard ($ABAB$) rule, increasing the number of rounds in the regular phase would improve fairness as there remains more scope to balance the advantages between the two teams. Naturally, organising more rounds requires more time and the number of players is also limited.
Crucially, the dynamic rules are not fairer than the already tried Alternating ($ABBA$) rule. However, the minor amendment proposed by \citet{Csato2021c} in the first round of the sudden death stage consistently makes them fairer.

The calculations reflect the theoretical results:
(1) the fairness of the Standard ($ABAB$) rule is not influenced by the number of rounds \citep{VandebroekMcCannVroom2018};
(2) the (Adjusted) Catch-up and the (Adjusted) Behind-first designs lead to the same winning probabilities (Proposition~\ref{Prop31}).

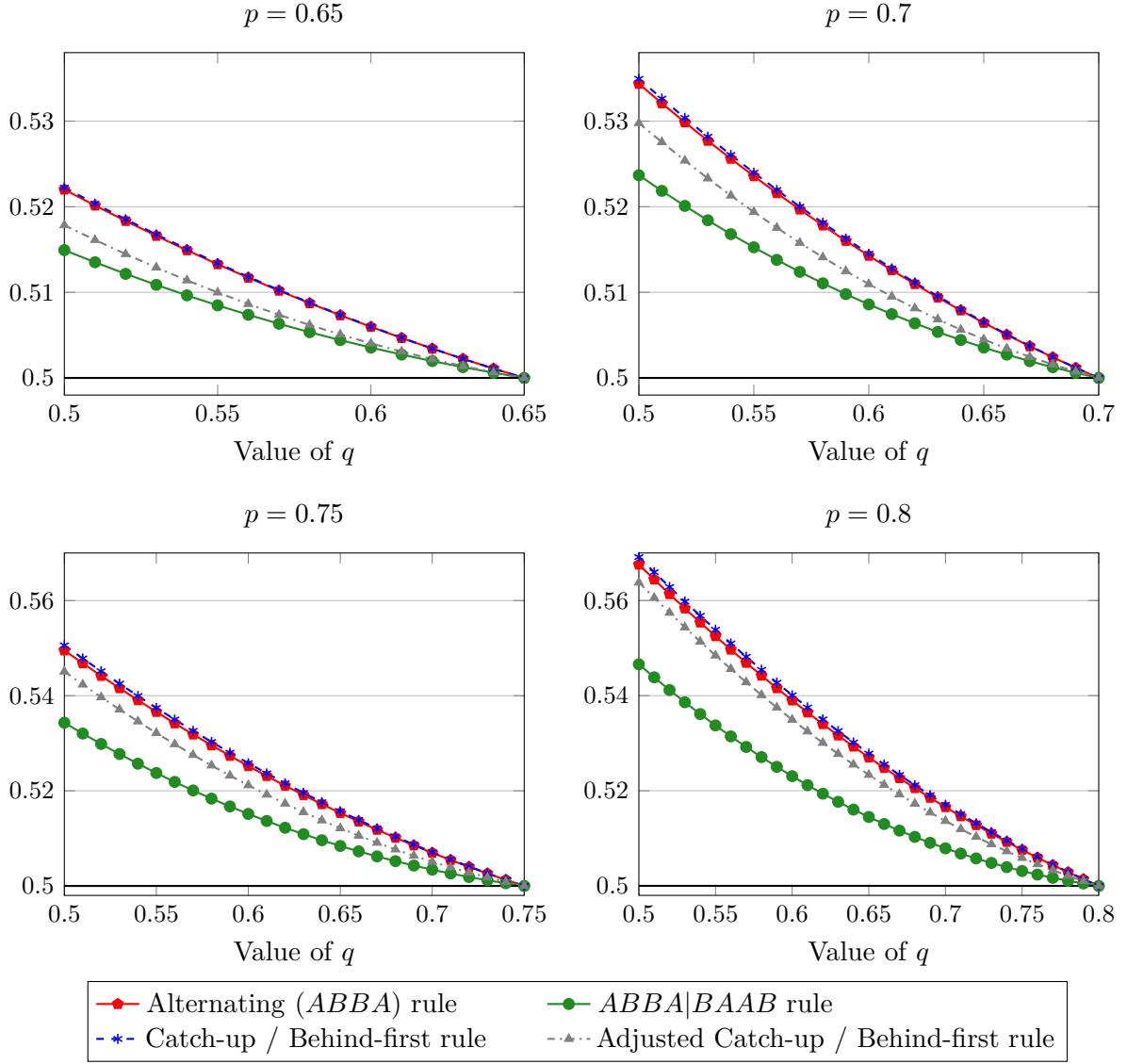
\begin{figure}[t!]
\centering

\begin{tikzpicture}
\begin{axis}[
name = axis1,
title = {$p = 0.65$},
title style = {font=\small},
xlabel = Value of $q$,
x label style = {font=\small},
width = 0.5\textwidth,
height = 0.4\textwidth,
legend style = {font=\small,at={(0.2,-0.15)},anchor=north west,legend columns=6},
ymajorgrids = true,
xmin = 0.5,
xmax = 0.65,
ymin = 0.498,
ymax = 0.538,
max space between ticks=50,
] 

\addlegendentry{Alternating ($ABBA$) rule$\quad$}
\addplot [red, thick, mark=pentagon*] coordinates {
(0.5,0.522002319769121)
(0.51,0.520143825520949)
(0.52,0.518343115894998)
(0.53,0.516599849914361)
(0.54,0.514913659144385)
(0.55,0.513284147786982)
(0.56,0.511710892785259)
(0.57,0.510193443938161)
(0.58,0.50873132402483)
(0.59,0.507324028938377)
(0.6,0.505971027828786)
(0.61,0.504671763254684)
(0.62,0.503425651343704)
(0.63,0.50223208196118)
(0.64,0.501090418886948)
(0.65,0.5)
};

\addlegendentry{$ABBA|BAAB$ rule$\quad$}
\addplot [ForestGreen, thick, mark=oplus*] coordinates {
(0.5,0.514929949580121)
(0.51,0.513512646733746)
(0.52,0.512159687659933)
(0.53,0.510870245179603)
(0.54,0.509643458981578)
(0.55,0.508478433912657)
(0.56,0.507374238311246)
(0.57,0.506329902385564)
(0.58,0.505344416637296)
(0.59,0.504416730331437)
(0.6,0.503545750012982)
(0.61,0.502730338070967)
(0.62,0.501969311350304)
(0.63,0.501261439811687)
(0.64,0.500605445239812)
(0.65,0.5)
};

\addlegendentry{Catch-up / Behind-first rule}
\addplot [blue, thick, dashed, mark=asterisk, mark options={solid,semithick}] coordinates {
(0.5,0.522235289295408)
(0.51,0.520353024973984)
(0.52,0.518528861670871)
(0.53,0.516762648633295)
(0.54,0.515054211494141)
(0.55,0.513403351896949)
(0.56,0.511809847125239)
(0.57,0.510273449736024)
(0.58,0.508793887197391)
(0.59,0.507370861530025)
(0.6,0.506004048952545)
(0.61,0.504693099530565)
(0.62,0.503437636829343)
(0.63,0.502237257569915)
(0.64,0.501091531288639)
(0.65,0.5)
};

\addlegendentry{Adjusted Catch-up / Behind-first rule$\quad$}
\addplot [gray, thick, dashdotted, mark=triangle*, mark options={solid,thin}] coordinates {
(0.5,0.517846867065021)
(0.51,0.516120944918203)
(0.52,0.514471517519388)
(0.53,0.512898743564249)
(0.54,0.511402737876593)
(0.55,0.509983570624287)
(0.56,0.508641266545511)
(0.57,0.50737580418503)
(0.58,0.506187115140164)
(0.59,0.505075083316198)
(0.6,0.504039544190895)
(0.61,0.503080284087895)
(0.62,0.502197039458698)
(0.63,0.501389496172982)
(0.64,0.50065728881703)
(0.65,0.5)
};
\draw [thick] (axis cs:0.5,0.5)  -- (axis cs:0.65,0.5);
\legend{}
\end{axis}

\begin{axis}[
at = {(axis1.south east)},
xshift = 0.1\textwidth,
title = {$p = 0.7$},
title style = {font=\small},
xlabel = Value of $q$,
x label style = {font=\small},
width = 0.5\textwidth,
height = 0.4\textwidth,
legend style = {font=\small,at={(0.2,-0.15)},anchor=north west,legend columns=6},
ymajorgrids = true,
xmin = 0.5,
xmax = 0.7,
ymin = 0.498,
ymax = 0.538,
]      

\addlegendentry{Alternating ($ABBA$) rule$\quad$}
\addplot [red, thick, mark=pentagon*] coordinates {
(0.5,0.534361090944444)
(0.51,0.532080972365286)
(0.52,0.529861873356239)
(0.53,0.527703582617567)
(0.54,0.525605866071145)
(0.55,0.52356846676536)
(0.56,0.521591104787707)
(0.57,0.51967347718482)
(0.58,0.517815257889714)
(0.59,0.51601609765597)
(0.6,0.514275623998675)
(0.61,0.512593441141872)
(0.62,0.510969129972324)
(0.63,0.509402247999379)
(0.64,0.507892329320749)
(0.65,0.506438884594032)
(0.66,0.505041401013749)
(0.67,0.503699342293776)
(0.68,0.502412148654962)
(0.69,0.501179236817808)
(0.7,0.5)
};

\addlegendentry{$ABBA|BAAB$ rule$\quad$}
\addplot [ForestGreen, thick, mark=oplus*] coordinates {
(0.5,0.523680312667559)
(0.51,0.521852045599161)
(0.52,0.520096671542051)
(0.53,0.518413503252399)
(0.54,0.516801822864275)
(0.55,0.515260879697548)
(0.56,0.513789888091693)
(0.57,0.512388025267857)
(0.58,0.511054429221328)
(0.59,0.509788196646413)
(0.6,0.508588380895579)
(0.61,0.50745398997456)
(0.62,0.506383984574949)
(0.63,0.50537727614569)
(0.64,0.50443272500471)
(0.65,0.503549138491819)
(0.66,0.502725269163814)
(0.67,0.501959813032667)
(0.68,0.501251407847488)
(0.69,0.500598631420859)
(0.7,0.5)
};

\addlegendentry{Catch-up / Behind-first rule}
\addplot [blue, thick, dashed, mark=asterisk, mark options={solid,semithick}] coordinates {
(0.5,0.534943233402778)
(0.51,0.532629963800375)
(0.52,0.530376858203575)
(0.53,0.528183897143369)
(0.54,0.526051044911355)
(0.55,0.523978249029385)
(0.56,0.521965439721824)
(0.57,0.520012529390315)
(0.58,0.518119412090994)
(0.59,0.516285963014049)
(0.6,0.514512037965562)
(0.61,0.512797472851561)
(0.62,0.511142083164205)
(0.63,0.509545663470027)
(0.64,0.508007986900195)
(0.65,0.506528804642697)
(0.66,0.505107845436406)
(0.67,0.503744815066962)
(0.68,0.502439395864412)
(0.69,0.501191246202569)
(0.7,0.5)
};

\addlegendentry{Adjusted Catch-up / Behind-first rule$\quad$}
\addplot [gray, thick, dashdotted, mark=triangle*, mark options={solid,thin}] coordinates {
(0.5,0.529773367944444)
(0.51,0.52754171108637)
(0.52,0.525387054597647)
(0.53,0.523309774714019)
(0.54,0.521310210639721)
(0.55,0.519388663252859)
(0.56,0.517545393818491)
(0.57,0.515780622709128)
(0.58,0.514094528132466)
(0.59,0.512487244866046)
(0.6,0.510958862998674)
(0.61,0.509509426678356)
(0.62,0.508138932866533)
(0.63,0.506847330098431)
(0.64,0.505634517249325)
(0.65,0.504500342306532)
(0.66,0.503444601146933)
(0.67,0.502467036319884)
(0.68,0.501567335835314)
(0.69,0.500745131956884)
(0.7,0.5)
};
\draw [thick] (axis cs:0.5,0.5)  -- (axis cs:0.7,0.5);
\legend{}
\end{axis}
\end{tikzpicture}

\vspace{0.25cm}
\begin{tikzpicture}
\begin{axis}[
name = axis3,
title = {$p = 0.75$},
title style = {font=\small},
xlabel = Value of $q$,
x label style = {font=\small},
width = 0.5\textwidth,
height = 0.4\textwidth,
legend style = {font=\small,at={(0.05,-0.25)},anchor=north west,legend columns=2},
ymajorgrids = true,
xmin = 0.5,
xmax = 0.75,
ymin = 0.498,
ymax = 0.57,
]      

\addlegendentry{Alternating ($ABBA$) rule$\quad \qquad$}
\addplot [red, thick, mark=pentagon*] coordinates {
(0.5,0.549530278081478)
(0.51,0.546817682806298)
(0.52,0.544167775084419)
(0.53,0.541580417782331)
(0.54,0.539055458649892)
(0.55,0.536592730030124)
(0.56,0.534192048571415)
(0.57,0.531853214942045)
(0.58,0.529576013546955)
(0.59,0.527360212246667)
(0.6,0.525205562078301)
(0.61,0.523111796978608)
(0.62,0.521078633508929)
(0.63,0.519105770582067)
(0.64,0.517192889190933)
(0.65,0.515339652138986)
(0.66,0.513545703772334)
(0.67,0.511810669713493)
(0.68,0.510134156596719)
(0.69,0.508515751804874)
(0.7,0.506955023207776)
(0.71,0.505451518901965)
(0.72,0.504004766951872)
(0.73,0.502614275132312)
(0.74,0.50127953067228)
(0.75,0.5)
};

\addlegendentry{$ABBA|BAAB$ rule$\qquad \qquad \qquad \qquad$}
\addplot [ForestGreen, thick, mark=oplus*] coordinates {
(0.5,0.534316291183721)
(0.51,0.532041304736522)
(0.52,0.529847430443475)
(0.53,0.527734102405373)
(0.54,0.525700727578527)
(0.55,0.523746683304167)
(0.56,0.521871314833649)
(0.57,0.520073932853401)
(0.58,0.518353811013259)
(0.59,0.516710183461742)
(0.6,0.51514224239161)
(0.61,0.513649135598854)
(0.62,0.512229964058118)
(0.63,0.510883779517351)
(0.64,0.509609582114296)
(0.65,0.508406318017296)
(0.66,0.507272877092641)
(0.67,0.506208090600586)
(0.68,0.505210728921939)
(0.69,0.504279499316993)
(0.7,0.503413043718385)
(0.71,0.502609936559311)
(0.72,0.501868682638399)
(0.73,0.501187715022348)
(0.74,0.500565392987327)
(0.75,0.5)
};

\addlegendentry{Catch-up / Behind-first rule$\qquad$}
\addplot [blue, thick, dashed, mark=asterisk, mark options={solid,semithick}] coordinates {
(0.5,0.550586368726647)
(0.51,0.547836127446923)
(0.52,0.545146540729948)
(0.53,0.542517646785024)
(0.54,0.539949477332748)
(0.55,0.537442056988214)
(0.56,0.534995402645831)
(0.57,0.532609522865759)
(0.58,0.530284417261869)
(0.59,0.528020075891186)
(0.6,0.525816478644756)
(0.61,0.523673594639911)
(0.62,0.521591381613837)
(0.63,0.519569785318464)
(0.64,0.517608738916572)
(0.65,0.51570816237911)
(0.66,0.513867961883683)
(0.67,0.512088029214142)
(0.68,0.510368241161277)
(0.69,0.50870845892455)
(0.7,0.50710852751484)
(0.71,0.505568275158174)
(0.72,0.504087512700404)
(0.73,0.502666033012799)
(0.74,0.501303610398527)
(0.75,0.5)
};

\addlegendentry{Adjusted Catch-up / Behind-first rule}
\addplot [gray, thick, dashdotted, mark=triangle*, mark options={solid,thin}] coordinates {
(0.5,0.545084248418393)
(0.51,0.542348552925829)
(0.52,0.539688139915278)
(0.53,0.537103510339167)
(0.54,0.534595138642665)
(0.55,0.532163470996921)
(0.56,0.529808923534696)
(0.57,0.527531880588335)
(0.58,0.525332692929963)
(0.59,0.523211676013855)
(0.6,0.52116910822088)
(0.61,0.519205229104975)
(0.62,0.517320237641546)
(0.63,0.515514290477769)
(0.64,0.513787500184683)
(0.65,0.512139933511056)
(0.66,0.510571609638936)
(0.67,0.509082498440837)
(0.68,0.507672518738516)
(0.69,0.506341536563273)
(0.7,0.505089363417737)
(0.71,0.503915754539076)
(0.72,0.502820407163591)
(0.73,0.501802958792664)
(0.74,0.500862985459975)
(0.75,0.5)
};
\draw [thick] (axis cs:0.5,0.5)  -- (axis cs:0.75,0.5);
\end{axis}

\begin{axis}[
at = {(axis3.south east)},
xshift = 0.1\textwidth,
title = {$p = 0.8$},
title style = {font=\small},
xlabel = Value of $q$,
x label style = {font=\small},
width = 0.5\textwidth,
height = 0.4\textwidth,
legend style = {font=\small,at={(-0.05,-0.2)},anchor=north west,legend columns=3},
ymajorgrids = true,
xmin = 0.5,
xmax = 0.8,
ymin = 0.498,
ymax = 0.57,
]      

\addlegendentry{Alternating ($ABBA$) rule$\quad$}
\addplot [red, thick, mark=pentagon*] coordinates {
(0.5,0.567566736000001)
(0.51,0.564428166543774)
(0.52,0.561352524172996)
(0.53,0.558339668219226)
(0.54,0.55538945288448)
(0.55,0.552501726756756)
(0.56,0.549676332321362)
(0.57,0.546913105468203)
(0.58,0.544211874995199)
(0.59,0.541572462107946)
(0.6,0.538994679915789)
(0.61,0.536478332924418)
(0.62,0.534023216525119)
(0.63,0.531629116480811)
(0.64,0.529295808408966)
(0.65,0.527023057261538)
(0.66,0.524810616802011)
(0.67,0.522658229079653)
(0.68,0.52056562390109)
(0.69,0.518532518299281)
(0.7,0.516558616)
(0.71,0.514643606885892)
(0.72,0.51278716645822)
(0.73,0.51098895529633)
(0.74,0.509248618514973)
(0.75,0.507565785219511)
(0.76,0.505940067959102)
(0.77,0.50437106217792)
(0.78,0.502858345664491)
(0.79,0.501401477999197)
(0.8,0.5)
};

\addlegendentry{$ABBA|BAAB$ rule$\quad$}
\addplot [ForestGreen, thick, mark=oplus*] coordinates {
(0.5,0.546594327077749)
(0.51,0.543845343084421)
(0.52,0.541185772770148)
(0.53,0.538615127973213)
(0.54,0.53613289578081)
(0.55,0.533738536117034)
(0.56,0.531431479287893)
(0.57,0.529211123488606)
(0.58,0.527076832278307)
(0.59,0.525027932027066)
(0.6,0.523063709340049)
(0.61,0.521183408463394)
(0.62,0.519386228676233)
(0.63,0.517671321673142)
(0.64,0.516037788941007)
(0.65,0.514484679134227)
(0.66,0.513010985451867)
(0.67,0.511615643020283)
(0.68,0.51029752628442)
(0.69,0.509055446410921)
(0.7,0.507888148705882)
(0.71,0.506794310049956)
(0.72,0.505772536353302)
(0.73,0.504821360032661)
(0.74,0.503939237512723)
(0.75,0.503124546753688)
(0.76,0.502375584806824)
(0.77,0.501690565399597)
(0.78,0.501067616551834)
(0.79,0.500504778224182)
(0.8,0.5)
};

\addlegendentry{Catch-up / Behind-first rule}
\addplot [blue, thick, dashed, mark=asterisk, mark options={solid,semithick}] coordinates {
(0.5,0.569138373333334)
(0.51,0.565964049441512)
(0.52,0.562849650052804)
(0.53,0.559795173712438)
(0.54,0.556800623346399)
(0.55,0.553866005621621)
(0.56,0.550991330308492)
(0.57,0.54817660964556)
(0.58,0.545421857706398)
(0.59,0.542727089768505)
(0.6,0.540092321684209)
(0.61,0.537517569253474)
(0.62,0.535002847598527)
(0.63,0.532548170540296)
(0.64,0.530153549976522)
(0.65,0.527818995261538)
(0.66,0.52554451258763)
(0.67,0.523330104367932)
(0.68,0.521175768620807)
(0.69,0.519081498355635)
(0.7,0.51704728096)
(0.71,0.515073097588185)
(0.72,0.513158922550971)
(0.73,0.511304722706653)
(0.74,0.509510456853262)
(0.75,0.50777607512195)
(0.76,0.506101518371472)
(0.77,0.504486717583757)
(0.78,0.502931593260523)
(0.79,0.501436054820897)
(0.8,0.5)
};

\addlegendentry{Adjusted Catch-up / Behind-first rule$\quad$}
\addplot [gray, thick, dashdotted, mark=triangle*, mark options={solid,thin}] coordinates {
(0.5,0.563788944000001)
(0.51,0.56057604369603)
(0.52,0.557434991283908)
(0.53,0.554366284656474)
(0.54,0.551370408221184)
(0.55,0.548447830756756)
(0.56,0.545599003265619)
(0.57,0.542824356822315)
(0.58,0.540124300418048)
(0.59,0.537499218801451)
(0.6,0.534949470315788)
(0.61,0.532475384732674)
(0.62,0.530077261082431)
(0.63,0.527755365481259)
(0.64,0.52550992895527)
(0.65,0.523341145261538)
(0.66,0.521249168706267)
(0.67,0.519234111960165)
(0.68,0.517296043871138)
(0.69,0.515434987274385)
(0.7,0.5136509168)
(0.71,0.511943756678149)
(0.72,0.510313378541932)
(0.73,0.508759599227979)
(0.74,0.507282178574877)
(0.75,0.505880817219511)
(0.76,0.504555154391358)
(0.77,0.503304765704832)
(0.78,0.502129160949739)
(0.79,0.501027781879901)
(0.8,0.5)
};
\draw [thick] (axis cs:0.5,0.5)  -- (axis cs:0.8,0.5);
\legend{}
\end{axis}
\end{tikzpicture}

\caption{The probability that team $A$ wins a penalty shootout \\
over five rounds including the sudden death stage}
\label{Fig1}

\end{figure}


In order to extend these findings, Figure~\ref{Fig1} plots the winning probability of team $A$ as the function of parameter $q$ for four values of $p$. The Standard ($ABAB$) rule is not depicted due to its high level of unfairness.
The main messages can be summarised as follows:
\begin{itemize}
\item
Fairness is difficult to achieve because the disadvantage of team $B$ cannot be balanced by providing it with a higher scoring probability as the first kicker in a later round.
\item
The straightforward Alternating ($ABBA$) rule is not worse than the dynamic sequences, the use of the latter cannot be justified by the need of improving fairness. This finding considerably decreases the value of the contribution by \citet{BramsIsmail2018}.
\item
The adjustment of the dynamic designs, suggested by \citet{Csato2021c}, robustly improves fairness. It is worth guaranteeing the first penalty of the sudden death stage for team $B$. Nonetheless, the effect of this modification decreases when the advantaged kicker becomes more successful ($p$ grows).
\item
The $ABBA|BAAB$ rule is the closest to fairness among all mechanisms considered here. If the Alternating ($ABBA$) rule is judged inadequate, a further step towards the Prouhet--True--Morse sequence can be effective with respect to fairness.
\end{itemize}

\subsection{Further aspects of comparison} \label{Sec34}

In the following, we discuss three other topics: the expected length of the sudden death stage and the probability of reaching it, as well as the strategy-proofness of the mechanisms.

Denote the expected length of the sudden death by $\varepsilon$ and the probability that it finishes in a given round by $R$. Then
\[
\varepsilon = R + (1-R)(1 + \varepsilon)
\]
because the expected length of the sudden death is $1 + \varepsilon$ if it is not decided in the first round. Consequently, $\varepsilon = 1/R$ where $R = p(1-q) + (1-p)p = 2p-pq-p^2$, which does not depend on the mechanism used to determine the shooting order.

The probability of reaching the sudden death stage can be influenced by the penalty shootout design, see \citet[Chapter~4.5]{Csato2021a}. However, this is not the case in our model.

\begin{proposition}  \label{Prop32}
The probability of reaching the sudden death stage is independent of the shooting order.
\end{proposition}

\begin{proof}
We focus on the standing of the penalty shootout after some rounds. The possible states can be distinguished by the difference $k$ between the number of goals scored by the two teams.
The transition probabilities from any round to the next round can be easily computed:
\begin{itemize}
\item
\emph{The score difference is $k=0$} \\
The scores will remain tied if both teams fail in the next round with probability $(1-p)(1-p)$, or both teams succeed in the next round with probability $pq$. Otherwise, the shootout goes to state $k=1$.
\item
\emph{The score difference is $k \neq 0$} \\
The score difference will remain $k$ if both teams fail in the next round with probability $(1-p)(1-q)$, or both teams succeed in the next round with probability $pq$. The shootout goes to state $k+1$ if the team lagging behind misses its penalty, while the other team scores. This has the probability $p(1-q)$, independently of the shooting order. Otherwise, if the team lagging behind scores and the other team misses, the penalty shootout goes to state $k-1$ with probability $(1-p)q$.
\end{itemize}

\begin{table}[t!]
  \centering
  \caption{Transition probabilities between the states of a penalty shootout}
  \label{Table3}
    \begin{tabularx}{\textwidth}{L CCC c} \toprule
    To $\rightarrow$ & \multicolumn{4}{c}{Possible states} \\
    From $\downarrow$ & Score diff.\ $k=0$ & Score diff.\ $k=1$ & Score diff.\ $k=2$ & $\cdots$ \\ \midrule
    Score diff.\ $k=0$ & $pq+(1-p)(1-p)$ & $p(1-q)+(1-p)p$ & 0     & $\cdots$ \\
    Score diff.\ $k=1$ & $(1-p)q$ & $pq+(1-p)(1-q)$ & $p(1-q)$ & $\cdots$ \\
    Score diff.\ $k=2$ & 0     & $(1-p)q$ & $pq+(1-p)(1-q)$ & $\cdots$ \\
    $\vdots$ & $\vdots$ & $\vdots$ & $\vdots$ & $\ddots$ \\ \bottomrule
    \end{tabularx}
\end{table}

Table~\ref{Table3} overviews the possible moves between these states. Since the transition probabilities are independent of the shooting order, the probability that the penalty shootout finishes in the state $k=0$ at the end of its regular phase---and continues with the sudden death stage---is determined only by the parameters $p$ and $q$.
\end{proof}

For instance, it has a probability of $0.215$ that a penalty shootout over five rounds reaches the sudden death stage.

In static mechanisms, the shooting order cannot be controlled by the teams. However, under a dynamic rule, the team kicking the second penalty might gain from deliberately missing it if the rewards outweigh the loss of an uncertain goal, see \citet{BramsIsmail2018}.
In our mathematical model, the scoring probabilities are independent of the shooting position, thus a deliberate miss cannot yield any profit.

\section{Conclusions} \label{Sec4}

Seven soccer penalty shootout rules have been studied in a reasonable model of First Mover Advantage.
Under the stationary scoring probabilities considered here, it remains sufficient to use static rules in order to improve fairness since the Catch-up and the Behind-first designs have no advantage over the $ABBA|BAAB$ mechanism from any perspective. On the other hand, compensating the second-mover by making it first-mover in the sudden death stage seems to be a reasonable modification for the dynamic sequences.

There are obvious extensions to our paper. One can introduce the Adjusted $ABBA$ or Adjusted $ABBA|BAAB$ mechanisms by providing the first penalty for team $B$ in the sudden death phase and attempt to prove their dominance over the unadjusted variant. Since the dynamic mechanisms tend to be more adaptive, they might perform better in non-stationary settings where the scoring probabilities vary across rounds.

The example of the National Football League reinforces that the decision-makers may be keen to reduce a considerable difference in winning probabilities when it is implied by a pure coin toss: the new rules for overtime, introduced in 2010 for playoff games and extended in 2012 to all games, better balance out the coin toss advantage without significantly increasing the expected length of the game \citep{Jones2012, MartinTimmonsPowell2018}.
While the experiments with fairer penalty shootout systems have recently been stopped in soccer, the debate will probably continue as advancing to the next round in a knockout tournament has huge financial implications and sporting effects. In addition, the abolition of the away goals rule in the European club soccer competitions from the 2021/22 season \citep{UEFA2021f} has obviously enhanced the role of penalty shootouts.
The theoretical results above can help to decide what policy options are worthwhile to choose for implementing in the field. Hopefully, our paper might also inspire further research on penalty shootouts.

\section*{Acknowledgements}
\addcontentsline{toc}{section}{Acknowledgements}
\noindent
\emph{L\'aszl\'o Csat\'o}, the father of the first author has contributed to the paper by writing the key part of the code making the necessary computations in Python. \\
We are indebted to \emph{Steven J.\ Brams} and \emph{Mehmet S.\ Ismail}, whose work was a great source of inspiration. \\
We are grateful to \emph{Tam\'as Halm} for useful advice. \\
Seven anonymous reviewers provided valuable comments and suggestions on earlier drafts. \\
The research was supported by the MTA Premium Postdoctoral Research Program grant PPD2019-9/2019, the NKFIH grant K 128573, and the Higher Education Institutional Excellence Program 2020 of the Ministry of Innovation and Technology in the framework of the `Financial and Public Services' research project at Corvinus University of Budapest.

\bibliographystyle{apalike}
\bibliography{All_references}

\end{document}